\newtheorem{observation}{Observation}
\begin{document}

\title{Incentive Compatible Mechanism for Influential Agent Selection}
%
\author{Xiuzhen Zhang \and
Yao Zhang \and
Dengji Zhao}
\authorrunning{X. Zhang et al.}
\institute{ShanghaiTech University, 
Shanghai, China \\
\email{\{zhangxzh1,zhangyao1,zhaodj\}@shanghaitech.edu.cn}}
\maketitle              
\begin{abstract}

Selecting the most influential agent in a network has huge practical value in applications. However, in many scenarios, the graph structure can only be known from agents' reports on their connections. In a self-interested setting, agents may strategically hide some connections to make themselves seem to be more important. In this paper, we study the incentive compatible (IC) selection mechanism to prevent such manipulations. Specifically, we model the progeny of an agent as her influence power, i.e., the number of nodes in the subgraph rooted at her. We then propose the Geometric Mechanism, which selects an agent with at least $1/2$ of the optimal progeny in expectation under the properties of incentive compatibility and fairness. Fairness requires that two roots with the same contribution in two graphs are assigned the same probability. Furthermore, we prove an upper bound of $1/(1+\ln 2)$ for any incentive compatible and fair selection mechanisms.

\keywords{Incentive compatibility  \and Mechanism Design \and Influence Approximation.}
\end{abstract}

\section{Introduction}
The motivation for influential agent selection in a network comes from real-world scenarios, where networks are constructed from the following/referral relationships among agents and the most influential agents are selected for various purposes (e.g., information diffusion~\cite{kimura2007extracting} or opinion aggregation~\cite{mohammadinejad2019consensus}). However, in many cases, the selected agents are rewarded (e.g., coupons or prizes), and the network structures can only be known from their reports on their following relationships. Hence, agents have incentives to strategically misreport their relationships to make themselves selected, which causes a deviation from the optimal results. An effective selection mechanism should be able to prevent such manipulations, i.e., agents cannot increase their chances to be selected by misreporting, which is a key property called incentive compatibility.

There have been many studies about incentive compatible selection mechanisms with different influential measurements for various purposes (e.g., maximizing the in-degrees of the selected agent~\cite{alon2011sum,fischer2015optimal,caragiannis2021impartial}). In this paper, we focus on selecting an agent with the largest progeny. For this purpose, the following two papers are the most related studies. Babichenko et al.~\cite{babichenko2018incentive} proposed the Two Path Mechanism based on random walks. Although their mechanism achieves a good approximation ratio of $2/3$ between the expected and the optimal influence in trees, it has no guaranteed performance in forests or general directed acyclic graphs (DAGs). Furthermore, Babichenko et al.~\cite{BabichenkoDT20} advanced these results by proposing another two selection mechanisms with an approximation ratio of about $1/3$ in forests. In these two papers, the authors assumed that agents can add their out-edges to any other agents in the network. This strong assumption limited the design of incentive compatible mechanisms. Also, in many cases, agents cannot follow someone they do not know. Therefore, we focus on the manipulation of hiding the connections they already have. In practice, it is possible that two agents know each other, but they are not connected. Then they are more than welcome to connect with each other, which is not harmful for the selection. Moreover, there still exists a noticeable gap between the approximation ratios of existing mechanisms and a known upper bound of $4/5$~\cite{BabichenkoDT20} for all incentive compatible selection mechanisms in forests. Therefore, by restricting the manipulations of agents, we want to investigate whether we can do better.

Furthermore, the previous studies mainly explored the forests, while in this paper, we also looked at DAGs. A DAG forms naturally in many applications because there exist sequential orders for agents to join the network. Each agent can only connect to others who joined the network before her, e.g., a reference or referral relationship network. Then, in a DAG, each node represents an agent, and each edge represents the following relationship between two agents. 

In this setting, the action of each agent is to report a set of her out-edges, which can only be a subset of her true out-edges. The goal is to design selection mechanisms to incentivize agents to report their true out-edge sets. Besides the incentive compatibility, we also consider another desirable property called fairness. Fairness requires that two agents with the maximum progeny in two graphs share the same probability of being selected if their progeny make no difference in both graphs (the formal definition is given in Section 2). Then, we present an incentive compatible selection mechanism with an approximation ratio of $1/2$ and prove an upper bound of $1/(1+\ln 2)$ for any incentive compatible and fair selection mechanism.

\subsection{Our Contributions}
We focus on the incentive compatible selection mechanism in DAGs.  It is natural to assign most of the probabilities to select agents with more progeny to achieve a good approximation ratio. Thus, we identify a special set of agents in each graph, called the influential set. Each agent in the set, called an influential node, is a root with the maximum progeny if deleting all her out-edges in the graph. They are actually the agents who have the chances to make themselves the optimal agent with manipulations. On the other hand, we also define a desirable property based on the graph structure, called fairness. It requires that the most influential nodes (the agents with the maximum progeny) in two graphs have the same probability to be selected if the number of nodes in the two graphs, the subgraphs constructed by the two nodes' progeny, and the influential sets are all the same.

Based on these ideas, we propose the Geometric Mechanism, which only assigns positive probabilities to the influential set. Each influential node will be assigned a selection probability related to her ranking in the influential set. We prove that the Geometric Mechanism satisfies the properties of incentive compatibility and fairness and can select an agent with her progeny no less than $1/2$ of the optimal progeny in expectation. The approximation ratio of the previous mechanisms is at most $1/\ln 16$ $(\approx 0.36)$. Under the constraints of incentive compatibility and fairness, we also give an upper bound of $1/(1+\ln 2)$ for the approximation ratio of any selection mechanism, while the previous known upper bound for any incentive compatible selection mechanism is $4/5$.

\subsection{Other Related Work}
\subsubsection{Without the constraint of incentive compatibility.} Focusing on influence maximization, Kleinberg~\cite{kleinberg2007cascading} proposed two models for describing agents' diffusion behaviours in networks, i.e., the linear threshold model and the independent cascade model. It is proved to be NP-hard to select an optimal subset of agents in these two models. Following this, there are studies on efficient algorithms to achieve bounded approximation ratios between the selected agents and the optimal ones under these two models~\cite{zhang2016identifying,huang2020efficient,ko2018efficient,morone2015influence}. 
 
In cases where only one influential agent can be selected, the most related literature also studied methods to rank agents based on their abilities to influence others in a given network, i.e., their centralities in the network. A common way is to characterize their centralities based on the structure of the network. In addition to the classic centrality measurements (e.g., closeness and betweenness~\cite{kundu2011new,pal2014centrality}) or Shapley value based characterizations~\cite{narayanam2008determining}, there are also other ranking methods in real-world applications, such as PageRank \cite{page1999pagerank} where each node is assigned a weight according to its connected edges and nodes.

\subsubsection{With the constraint of incentive compatibility.} In this setting, incentive compatible selection mechanisms are implemented in two ways: with or without monetary payments. The first kind of mechanism incentivizes agents to truthfully reveal their information by offering them payments based on their reports. For example, Narayanam et al.~\cite{narahari2011incentive} considered the influence maximization problem where the network structure is known to the planner, and each agent will be assigned a fixed positive payment based on influence probabilities they reported. With monetary incentives, there are also different mechanisms proposed to prevent agents from increasing their utilities by duplicating themselves or colluding together~\cite{emek2011mechanisms,shen2019multi,zhang2020sybil}. To achieve incentive compatible mechanisms without monetary incentives, the main idea of the existing work is to design probabilistic selection mechanisms and ensure that each agent's selection probability is independent of her report~\cite{alon2011sum,aziz2016strategyproof,fischer2015optimal}. For example, Alon et al.~\cite{alon2011sum} designed randomized selection mechanisms in the setting of approval voting, where networks are constructed from agents' reports. Our work belongs to this category.

\section{The Model}
Let $\mathcal{G}^n$ be the set of all possible directed acyclic graphs (DAGs) with $n$ nodes and $\mathcal{G} = \bigcup_{n \in \mathbb{N}^*}\mathcal{G}^n$ be the set of all directed acyclic graphs. Consider a network represented by a graph $G=(N,E) \in \mathcal{G}$, where $N = \{1,2,\cdots, n\}$ is the node set and $E$ is the edge set. Each node $i \in N$ represents an agent in the network and each edge $(i,j) \in E$ indicates that agent $i$ follows (quotes) agent $j$. Let $P_i$ be the set of agents who can reach agent $i$, i.e., for all agent $j \in P_i$, there exists at least one path from $j$ to $i$ in the network. We assume $i \in P_i$. Let $p_i = |P_i|$ be agent $i$'s progeny and $p^* = \max_{i\in N} |P_i|$ be the maximum progeny in the network.

Our objective is to select the agent with the maximum progeny. However, we do not know the underlying network and can only construct the network from the following/referral relationships declared by all agents, i.e., their out-edges. In a game-theoretical setting, agents are self-interested. If we simply choose an agent $i \in N$ with the maximum progeny, agents who directly follow agent $i$ may strategically misreport their out-edges (e.g., not follow agent $i$) to increase their chances to be selected. Therefore, in this paper, our goal is to design a selection mechanism to assign each agent a proper selection probability, such that no agent can manipulate to increase her chance to be selected and it can provide a good approximation of the expected progeny in the family of DAGs.

For each agent $i \in N$, her type is denoted by her out-edges $\theta_i= \{(i,j) \mid (i,j) \in E, j \in N\}$, which is only known to her. Let $\theta = (\theta_1,\cdots,\theta_n)$ be the type of all agents and $\theta_{-i}$ be the type of all agents expect $i$. Let $\theta_i'$ be agent $i$'s report to the mechanism and  $\theta' = (\theta_1',\cdots, \theta_n')$ be the report profile of all agents. Note that agents do not know the others except for the agents they follow in the network. Then $\theta_i' \subseteq \theta_i$ should hold for all $i \in N$, which satisfies the Nested Range Condition~\cite{green1986partially} thus guarantees the revelation principles. Thereby, we focus on direct revelation mechanism design here. Let $\Phi(\theta_i)$ be the space of all possible report profiles of agent $i$ with true type $\theta_i$, i.e., $\Phi(\theta_i) = \{\theta_i' \mid \theta_i' \subseteq \theta_i\}$. Let $\Phi(\theta)$ be the set of all possible report profiles of all agents with true type profile $\theta$.

Given $n$ agents, let $\Theta^n$ be the set of all possible type profile of $n$ agents. Given $\theta \in \Theta^n$ and a report profile $\theta' \in \Phi(\theta)$, let $G(\theta')= (N,E')$ be the graph constructed from $\theta'$, where $N = \{1,2,\cdots,n\}$ and $E' = \{(i,j) \mid i,j\in N, (i,j) \in \theta' \}$. Denote the progeny of agent $i$ in graph $G(\theta')$ by $p_i(\theta')$ and the maximum progeny in this graph by $p^*(\theta')$. We give a formal definition of a selection mechanism.

\begin{definition}
A selection mechanism $\mathcal{M}$ is a family of functions $f: \Theta^n \rightarrow [0,1]^{n}$ for all $n \in \mathbb{N}^*$. Given a set of agents $N$ and their report profile $\theta'$, the mechanism $\mathcal{M}$ will give a selection probability distribution on $N$. For each agent $i \in N$, denote her selection probability by $x_i(\theta')$. We have $x_i(\theta')\in [0,1]$ for all $i \in N$ and $\sum_{i \in N}x_i(\theta') \leq 1$.
\end{definition}

Next, we define the property of \emph{incentive compatibility} for a selection mechanism, which incentivizes agents to report their out-edges truthfully.

\begin{definition}[Incentive Compatible]
A selection mechanism $\mathcal{M}$ is \textbf{incentive compatible (IC)} if for all $N$, all $i \in N$, all $\theta \in \Theta^n$, all $\theta_{-i}' \in \Phi(\theta_{-i})$ and all $\theta_i' \in \Phi(\theta_i)$, $x_i((\theta_i,\theta_{-i}')) \geq x_i((\theta_i',\theta_{-i}'))$.
\end{definition}

An incentive compatible selection mechanism guarantees that truthfully reporting her type is a dominant strategy for all agents. An intuitive realization is a uniform mechanism where each agent gets the same selection probability. However, there exists a case where most of the probabilities are assigned to agents with low progeny, thus leading to an unbounded approximation ratio. We desire an incentive compatible selection mechanism to achieve a bounded approximation ratio for all DAGs. We call this property \emph{efficiency} and define the efficiency of a selection mechanism by its approximation ratio. 

\begin{definition}
Given a set of agents $N = \{1,2,\cdots,n\}$, their true type profile $\theta \in \Theta^n$, the performance of an incentive compatible selection mechanism in the graph $G(\theta)$ is defined by
\[R(G(\theta)) = \frac{\sum_{i \in N} x_i(\theta)p_i(\theta)}{p^*(\theta)}.\]
We say an incentive compatible selection mechanism $\mathcal{M}$ is \textbf{efficient with an approximation ratio $r$} if for all $N$, all $\theta \in \Theta^n$, $R(G(\theta)) \geq r$.
\end{definition}
This property guarantees that the worst-case ratio between the expected progeny of the selected agent and the maximum progeny is at least $r$ for all DAGs. Without the constraint of incentive compatibility, an optimal selection mechanism will always choose the agent with the maximum progeny. While in the strategic setting, an agent with enough progeny can misreport to make herself the agent with the maximum progeny. We define such an agent as \textit{an influential node}. In a DAG, there can be multiple influential nodes. Thus we define them as \textit{the influential set}, denoted by $S^{inf.}$. For example, in the graph shown in Figure~\ref{fig:inf}, when removing agent $3$'s out-edge, agent $3$ will be the root with the maximum progeny, same for agents $1$ and $2$. The formal definitions are as follows.
\begin{figure}[!htb]
  \centering
  \includegraphics[width=.45\linewidth]{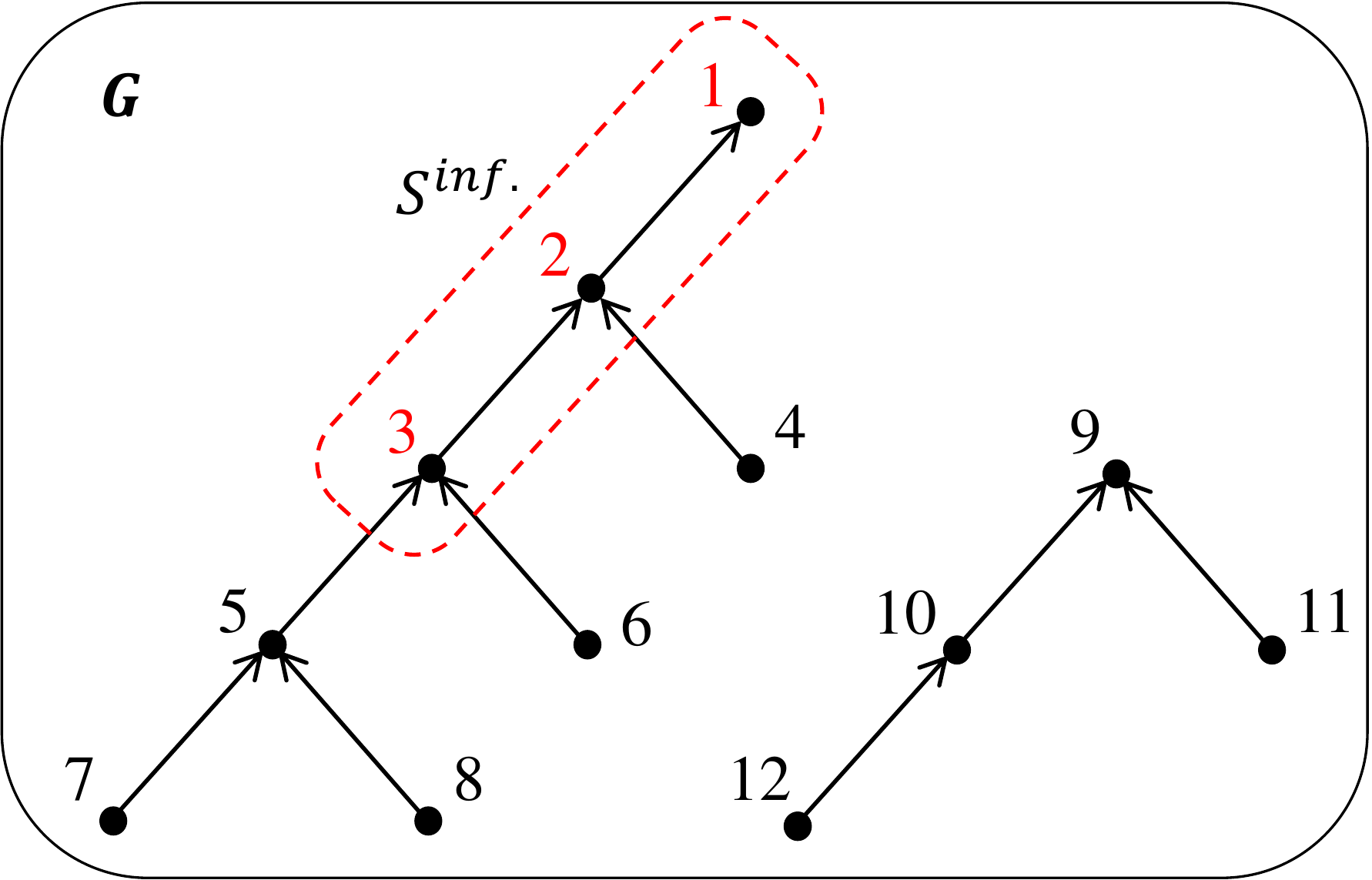}
  \caption{An example for illustrating the definition of influential nodes: agents $1,2,3$ are the influential nodes and they form the influential set in the graph $G$. }
  \label{fig:inf}
\end{figure}

\begin{definition}\label{def:inf}
For a set of agents $N=\{1,2,\cdots,n\}$, their true type profile $\theta \in \Theta^n$ and their report profile $\theta' \in \Phi(\theta)$, an agent $i \in N$ is an \textbf{influential node} in the graph $G(\theta')$ if $p_i ((\theta_{-i}',\emptyset)) \succ p_j((\theta_{-i}',\emptyset))$ for all $j \neq i \in N$, where $p_i\succ p_j$ if either $p_i > p_j$ or $p_i = p_j$ with $i<j$.
\end{definition}

\begin{definition}\label{def:infset}
For a set of agents $N=\{1,2,\cdots,n\}$, their true type profile $\theta \in \Theta^n$ and their report profile $\theta' \in \Phi(\theta)$, the \textbf{influential set} in the graph $G(\theta')$ is the set of all influential nodes, denoted by $S^{inf.}(G(\theta')) = \{s_1,\cdots,s_m\}$, where $s_i \succ s_j$ holds if and only if $p_i \succ p_j$, $s_i \succ s_{j}$ holds for all $m\geq j> i \geq 1$ and $m = |S^{inf.}(G(\theta'))|$. 
\end{definition}

According to the above definitions, we present three observations about the properties of influential nodes.
\begin{observation}\label{ob:onepath}
Given a set of agents $N = \{1,2,\cdots,n\}$, their true type $\theta \in \Theta^n$ and their report profile $\theta' \in \Phi(\theta)$, there must exist a path that passes through all agents in $S^{inf.}(G(\theta'))$ with an increasing order of their progeny.
\end{observation}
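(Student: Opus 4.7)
My plan is to reduce the observation to a pairwise connectivity fact and then glue the pieces together using acyclicity. Specifically, I would show that for every $i\in\{1,\dots,m-1\}$ there exists a directed path from $s_{i+1}$ to $s_i$ in $G(\theta')$. Concatenating these $m-1$ paths in order produces a walk from $s_m$ to $s_1$ that visits $s_{m-1},s_{m-2},\dots,s_2$ in between, and since $G(\theta')$ is a DAG any such walk must in fact be a simple path (a repeated vertex would close a directed cycle). By Definition~\ref{def:infset} the influential nodes then appear along this path in strictly increasing order of progeny, which is exactly the conclusion.

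To establish the pairwise claim for a fixed $i$, I would compare the progeny of $s_i$ and of $s_{i+1}$ before and after $s_{i+1}$ truncates its out-edges. On the reported graph $s_i\succ s_{i+1}$ gives $p_{s_i}(\theta')\succ p_{s_{i+1}}(\theta')$. Because $s_{i+1}$ is influential, Definition~\ref{def:inf} applied with $j=s_i$ yields $p_{s_{i+1}}((\theta'_{-s_{i+1}},\emptyset))\succ p_{s_i}((\theta'_{-s_{i+1}},\emptyset))$, and removing $s_{i+1}$'s out-edges cannot affect which agents can reach $s_{i+1}$, so $p_{s_{i+1}}$ itself is unchanged by the truncation. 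Chaining the two $\succ$-comparisons through this equality yields $p_{s_i}(\theta')\succ p_{s_i}((\theta'_{-s_{i+1}},\emptyset))$, and because both sides refer to the same node this $\succ$ collapses to a strict numerical inequality $p_{s_i}(\theta')>p_{s_i}((\theta'_{-s_{i+1}},\emptyset))$. Hence at least one agent that reached $s_i$ in $G(\theta')$ can no longer reach it after the truncation; every such agent's path to $s_i$ in $G(\theta')$ must have used an out-edge of $s_{i+1}$, which forces the path to pass through $s_{i+1}$ itself, and its suffix from $s_{i+1}$ onward is the desired directed path from $s_{i+1}$ to $s_i$.

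The main delicacy I expect is threading the mixed $\succ$-ordering (numerical progeny with an index-based tiebreak) through three comparisons carefully enough to still extract a \emph{strict} drop in $p_{s_i}$; the crucial observation is that $\succ$ between two progeny counts of the \emph{same} node reduces to ordinary $>$, which makes the chain work regardless of whether either of the initial comparisons is strict in value or only by tiebreak. Beyond that, the concatenation into a simple path is immediate from acyclicity, so the remainder of the argument is essentially bookkeeping.
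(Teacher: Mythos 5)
Your proof is correct and rests on the same two ingredients as the paper's own argument: the pair of $\succ$-comparisons supplied by Definitions~\ref{def:inf} and~\ref{def:infset}, together with the fact that deleting a node's out-edges cannot change its own progeny in a DAG. The only difference is one of framing --- the paper assumes some $s_j$ cannot reach $s_i$ and derives two contradictory $\succ$-comparisons, whereas you run essentially the same chain forward to obtain a strict numerical drop in $p_{s_i}$ and then extract the witnessing $s_{i+1}\!\to\! s_i$ path constructively; your write-up also makes explicit the concatenation-plus-acyclicity step that the paper leaves implicit.
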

\begin{proof}
Let the influential set be $S^{inf.}(G(\theta')) = \{s_1,\cdots,s_m\}$. The statement shows that agent $s_j$ is one of the progeny of agent $s_i$ for all $1 \leq i < j\leq m$, then we can prove it by contradiction.

Assume that there doesn't exist a path passing through all agents in the influential set, then there must be an agent $j$ such that $s_j \notin P_{s_i}$ for all $1\leq i < j$. Since $s_i,s_j \in S^{inf.}(G(\theta'))$, for all $1\leq i < j$, we have
\begin{align}
    p_{s_i}((\theta'_{-s_i},\emptyset)) &\succ p_{s_j}((\theta'_{-s_i},\emptyset)), \label{1} \\
    p_{s_j}((\theta'_{-s_j},\emptyset)) &\succ p_{s_i}((\theta'_{-s_j},\emptyset)). \label{2}
\end{align}
We also have $p_{s_j}((\theta'_{-s_j},\emptyset)) = p_{s_j}((\theta'_{-s_i},\emptyset))$ and $p_{s_i}((\theta'_{-s_j},\emptyset)) = p_{s_i}((\theta'_{-s_i},\emptyset))$ since $s_j \notin P_{s_i}$ and there is no cycle in the graph. With the lexicographical tie-breaking way, the inequality \ref{1} and \ref{2} cannot hold simultaneously. Therefore, we get a contradiction.
\end{proof}

\begin{observation}\label{ob:mosti}
Given a set of agents $N = \{1,2,\cdots, n\}$, their true type $\theta \in \Theta^n$ and their report profile $\theta' \in \Phi(\theta)$, let $S^{inf.}(G(\theta'))= \{s_1,\cdots,s_m\}$ be the influential set in the graph $G(\theta')$. Then, agent $s_1$ has no out-edges and she is the one with the maximum progeny, i.e., agent $s_1$ is \textbf{the most influential node}.
\end{observation}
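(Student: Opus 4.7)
The plan is to locate the node $r$ with the lexicographically maximum progeny in $G(\theta')$ and show that it coincides with $s_1$. From that identification, both assertions of the observation---that $s_1$ has no outgoing edges and that $s_1$ achieves the maximum progeny---follow immediately.

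The first step is to show that the lex-max-progeny node $r$ cannot have any outgoing edge in $G(\theta')$. I will argue by contradiction: if $(r,j)\in\theta'_r$, then every $k\in P_r$ reaches $r$ and, by concatenating the edge $(r,j)$, also reaches $j$, giving $P_r\subseteq P_j$. Since $G(\theta')$ is acyclic, $j\in P_j$ but $j\notin P_r$, so $|P_j|>|P_r|$, contradicting the (lex-)maximality of $p_r(\theta')$. Hence $\theta'_r=\emptyset$.

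The second step is to show $r\in S^{inf.}(G(\theta'))$. Because $\theta'_r=\emptyset$, the manipulated profile $(\theta'_{-r},\emptyset)$ coincides with $\theta'$ itself, so the condition of Definition~\ref{def:inf} reduces to $p_r(\theta')\succ p_j(\theta')$ for all $j\neq r$, which is precisely the defining property of $r$. Therefore $r$ is an influential node. By Definition~\ref{def:infset}, $s_1$ is the influential node with the lex-greatest progeny in $G(\theta')$; since $r$ is lex-greatest among \emph{all} nodes of $G(\theta')$, it is in particular lex-greatest among the influential nodes, whence $s_1=r$.

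No step here constitutes a real obstacle: the argument is essentially a one-line DAG contradiction followed by a trivial reduction using $\theta'_r=\emptyset$. The only subtlety is being careful with the lexicographic tie-breaking $\succ$, which is what makes both the ``lex-maximum'' node and the enumeration $s_1,\ldots,s_m$ of the influential set uniquely defined, so that the equality $s_1=r$ is unambiguous.
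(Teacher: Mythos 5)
Your proof is correct, but it is organized differently from the paper's. The paper argues by contradiction starting from $s_1$: assuming $s_1$ has an out-edge, it asserts the existence of some agent $i$ downstream of $s_1$ (with $s_1\in P_i$) that is itself an influential node with $p_i\succ p_{s_1}$, contradicting the ordering of $S^{inf.}$; the existence argument for that $i$ is the delicate (and somewhat loosely written) part of the paper's proof. You instead start from the other end: you take the globally lex-maximal node $r$, show it must be a sink via the clean DAG fact that an edge $(r,j)$ forces $P_r\subseteq P_j$ with $j\in P_j\setminus P_r$ and hence $|P_j|>|P_r|$, observe that $\theta_r'=\emptyset$ makes the influential-node condition for $r$ collapse to ordinary lex-maximality of $p_r(\theta')$, and then identify $r$ with $s_1$ because $r$ is lex-greatest among all nodes and a fortiori among influential ones. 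Both proofs ultimately rest on the same observation (an out-edge yields a strict successor with strictly larger progeny), but your constructive identification of $s_1$ with the global maximizer avoids the paper's existence claim entirely and delivers both conclusions of the observation in one stroke; the only care needed, as you note, is that lexicographic tie-breaking makes both $r$ and the enumeration of $S^{inf.}$ unique, and that deleting $r$'s out-edges cannot change any node's in-reachability set, which your reduction implicitly uses and which holds in a DAG.
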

\begin{proof}
We prove this statement by contradiction. Assume that agent $s_1$ has at least one out-edge. Then there must exist an agent $i \in N$ such that $s_1 \in P_i$ and $p_i((\theta_{-i}',\emptyset)) \succ p_k((\theta_{-i}',\emptyset))$ for all $k \neq i$, otherwise there must exist an agent $j \in N$ such that $s_1 \notin P_j$ and  $p_j((\theta_{-i}',\emptyset)) \succ p_i((\theta_{-i}',\emptyset))$, which means that $s_1 \notin S^{inf.}(G(\theta'))$ since $p_i((\theta_{-i}',\emptyset)) \succ p_{s_1}((\theta_{-i}',\emptyset))$. Thus, such an $i$ must exist when agent $s_1$ has out-edges. Now, we must have $i \in S^{inf.}(G(\theta'))$ and $p_i \succ p_{s_1}$, which contradicts with $p_{s_1} \succ p_{j}$ for all $j \in S^{inf.}(G(\theta'))$ and $j \neq s_1$.

Then we can conclude that agent $s_1$ has no out-edges. Since $p_{s_1}((\theta_{-s_1}',\emptyset))$ $ \succ p_{k}((\theta_{-s_1}',\emptyset))$ for all $k \neq s_1$, we can get that agent $s_1$ has the maximum progeny in the graph $G(\theta')$ and she is the most influential node.
\end{proof}

\begin{observation}\label{ob:set}
Given a set of agents $N = \{1,2,\cdots,n\}$, their true type profile $\theta \in \Theta^n$, for all agent $i \in N$, all $\theta_{-i}' \in \Phi(\theta_{-i})$, if agent $i$ is not an influential node in the graph $G((\theta_{-i}',\theta_i))$, she cannot make herself an influential node by misreporting.
\end{observation}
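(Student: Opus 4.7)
The plan is to observe that Definition~\ref{def:inf} determines whether agent $i$ is influential by inspecting a modified graph in which \emph{her own out-edges are replaced by $\emptyset$}. Consequently, her influential status is a function of $\theta_{-i}'$ alone, and no choice of her own report can alter it.

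Concretely, first I would fix an arbitrary misreport $\theta_i' \in \Phi(\theta_i)$ and unpack Definition~\ref{def:inf} applied to the profile $(\theta_{-i}', \theta_i')$: agent $i$ is influential in $G((\theta_{-i}', \theta_i'))$ if and only if
\[
p_i((\theta_{-i}', \emptyset)) \succ p_j((\theta_{-i}', \emptyset)) \quad \text{for all } j \neq i.
\]
The key point is that the argument $(\theta_{-i}', \emptyset)$ on both sides is obtained by overwriting position $i$ with $\emptyset$, so the misreport $\theta_i'$ has been discarded and only $\theta_{-i}'$ survives.

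Second, I would apply the very same definition to the truthful profile $(\theta_{-i}', \theta_i)$; after the $\emptyset$-substitution at position $i$ the resulting graph is literally identical to the one above, so the condition to be checked is character-for-character the same. Hence the two truth values coincide, and the hypothesis that agent $i$ is not influential at $(\theta_{-i}', \theta_i)$ immediately transfers to every $(\theta_{-i}', \theta_i')$ with $\theta_i' \in \Phi(\theta_i)$, which is the desired conclusion.

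I do not expect a real obstacle; the statement is a one-line consequence of the definition. The only subtlety is notational, namely making explicit to the reader that Definition~\ref{def:inf} already quotients out agent $i$'s own declaration via the forced $\emptyset$ in the $i$-th coordinate, so manipulating that coordinate is ineffective for changing membership in the influential set.
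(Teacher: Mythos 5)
Your proposal is correct and is essentially the paper's own argument: the paper's proof likewise observes that membership in the influential set is decided by comparing $p_i((\theta_{-i}',\emptyset))$ against $p_j((\theta_{-i}',\emptyset))$, quantities that do not depend on agent $i$'s reported out-edges. You simply spell out the $\emptyset$-substitution step more explicitly than the paper does.
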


\begin{proof}
Given other agents' report $\theta_{-i}'$, whether an agent $i$ can be an influential node depends on the relation between $p_i((\theta_{-i}',\emptyset))$ and $p_j((\theta_{-i}',\emptyset))$, rather than the out-edges reported by agent $i$.
\end{proof}

There is one additional desirable property we consider in this paper. Consider two graphs $G,G' \in \mathcal{G}^n$ illustrated in Figure~\ref{fig:consistency}, where they have the same influential set ($S^{inf.}(G) = S^{inf.}(G')$) and $s_1$ is the most influential node in both graphs. Additionally, the subgraphs constructed by agents in $P_{s_1}$ are the same in both $G$ and $G'$ (The red parts in Figure~\ref{fig:consistency}, represented by $G(s_1) = G'(s_1)$). The only difference between the two graphs lies in the edges that are not in the subgraphs constructed by agents in $P_{s_1}$ (The yellow parts in Figure~\ref{fig:consistency}).
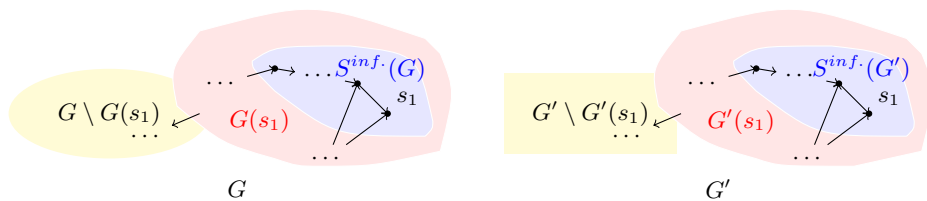
\begin{figure}[!htbp]
    \centering
    \begin{tikzpicture}
    \usetikzlibrary{shapes}
    \usetikzlibrary{decorations.pathreplacing}
    \usetikzlibrary{calc}
    \node[fill=yellow!20, ellipse, inner sep=7.5pt] (B1) at (-3.7,0) {$G\setminus G(s_1)$};
    \filldraw[draw=white, fill=red!10]
    (.7,-.5) .. controls (-.2,-.7) .. (-.9,-.7) .. controls (-1.2,-.7) .. (-1.5,-.6) .. controls (-2.2,-.4) .. (-2.8,.1) .. controls (-2.9,.6) .. (-2.5,1.1) .. controls (-1.,1.5) .. (.5,1.) .. controls (.6,.9) .. (.7,.8) .. controls (.9,.3) .. (.7,-.5);
    \filldraw[draw=white, fill=blue!10]
    (.5,-.3) .. controls (-.7,-.3) .. (-1.5,.3) .. controls (-1.8,.6) .. (-1.8,.7) .. controls (-1.5,.9) .. (-.5,.9) .. controls (.4,.8) .. (.6,-.1) .. controls (.6,-.2) .. (.5,-.3);
    \node[blue] (T1) at (-.1,.6) {$S^{inf.}(G)$};
    \node[red] (T2) at (-1.7,-.1) {$G(s_1)$};
    \filldraw (0,0) circle (1pt) node[align=center, above right] (A1) {$s_1$};
    \filldraw (-.4,.4) circle (1pt);
    \node (A2) at (-.9,.5) {$\cdots$};
    \node (A3) at (-2.2,.4) {$\cdots$};
    \node (A4) at (-.8,-.6) {$\cdots$};
    \node (A5) at (-3.2,-.3) {$\cdots$};
    \filldraw (-1.5,.6) circle (1pt);
    \draw[->] (-1.5,.6) -> (A2);
    \draw[->] (A2) -> (-.4,.4);
    \draw[->] (-.4,.4) -> (0,0);
    \draw[->] (A3) -> (-1.5,.6);
    \draw[->] (A4) -> (-.4,.4);
    \draw[->] (A4) -> (0,0);
    \draw[->] (-2.5,0) -> (A5);
    \node (G1) at (-2.0,-1.0) {$G$};
    \node[fill=yellow!20, inner sep=10.5pt] (B2) at (2.7,0) {$G'\setminus G'(s_1)$};
    \filldraw[draw=white, fill=red!10]
    (7.1,-.5) .. controls (6.2,-.7) .. (6.2,-.7) .. controls (5.2,-.7) .. (4.9,-.6) .. controls (4.2,-.4) .. (3.6,.1) .. controls (3.5,.6) .. (3.9,1.1) .. controls (5.4,1.5) .. (6.9,1.) .. controls (7.0,.9) .. (7.1,.8) .. controls (7.3,.3) .. (7.1,-.5);
    \filldraw[draw=white, fill=blue!10]
    (6.9,-.3) .. controls (5.7,-.3) .. (4.9,.3) .. controls (4.6,.6) .. (4.6,.7) .. controls (4.9,.9) .. (5.9,.9) .. controls (6.8,.8) .. (7.0,-.1) .. controls (7.0,-.2) .. (6.9,-.3);
    \node[blue] (T3) at (6.3,.6) {$S^{inf.}(G')$};
    \node[red] (T4) at (4.7,-.1) {$G'(s_1)$};
    \filldraw (6.4,0) circle (1pt) node[align=center, above right] (A6) {$s_1$};
    \filldraw (6.0,.4) circle (1pt);
    \node (A7) at (5.5,.5) {$\cdots$};
    \node (A8) at (4.2,.4) {$\cdots$};
    \node (A9) at (5.6,-.6) {$\cdots$};
    \node (A10) at (3.2,-.3) {$\cdots$};
    \filldraw (4.9,.6) circle (1pt);
    \draw[->] (4.9,.6) -> (A7);
    \draw[->] (A7) -> (6.0,.4);
    \draw[->] (6.0,.4) -> (6.4,0);
    \draw[->] (A8) -> (4.9,.6);
    \draw[->] (A9) -> (6.0,.4);
    \draw[->] (A9) -> (6.4,0);
    \draw[->] (3.9,0) -> (A10);
    \node (G2) at (4.4,-1.0) {$G'$};
    \end{tikzpicture}
    \caption{Example for fairness: in graphs $G$ and $G'$, $S^{inf.}(G) = S^{inf.}(G')$, $G(s_1) = G'(s_1)$; the only difference is in the yellow parts. Fairness requires that $x_{s_1}(G) = x_{s_1}(G')$.}
    \label{fig:consistency}
\end{figure}

We can observe that $s_1$ and all her progeny have the same contributions in the two graphs intuitively. Therefore, it is natural to require that a selection mechanism assigns the same probability to $s_1$ in the two graphs. We call this property \emph{fairness} and give the formal definition as follows.

\begin{definition}[Fairness]
For a graph $G = (N,E) \in \mathcal{G}$, define a subgraph constructed by agent $i$'s progeny as $G(i) = (P_i,E_i)$, where $E_i = \{(j,k) \mid j,k \in P_i, (j,k) \in E\}$ and $i \in N$.

A selection mechanism $\mathcal{M}$ is \textbf{fair} if for all $N$, for all $G, G' \in \mathcal{G}^n$ where $S^{inf.}(G) = S^{inf.}(G') = \{s_1,\cdots,s_m\}$ and $G(s_1) = G'(s_1)$, then $x_{s_1}(G) = x_{s_1}(G')$.
\end{definition}

\section{Geometric Mechanism}\label{mechanism}
In this section, we present the Geometric Mechanism, denoted by $\mathcal{M}_G$. In Observation~\ref{ob:set}, an agent without enough progeny cannot make herself an influential node by reducing her out-edges. Therefore, to maximize the approximation ratio, we can just assign positive selection probabilities to agents in the influential set. This is the intuition of the Geometric Mechanism.

\begin{framed}
 \noindent\textbf{Geometric Mechanism}
 
 \noindent\rule{\textwidth}{0.5pt}
 \begin{enumerate}
     \item Given the set of agents $N = \{1,2,\cdots,n\}$, their true type profile $\theta \in \Theta^n$ and their report profile $\theta' \in \Phi(\theta)$, find the influential set $S^{inf.}$ in the graph $G(\theta')$:
     \[S^{inf.}(G(\theta')) = \{ s_1,\cdots,s_m\},\]
     where $s_i \succ s_{i+1}$ for all $1 \leq i \leq m-1$.
     \item The mechanism gives the selection probability distribution on all agents as the following.
     \begin{align*}
         x_i = \begin{cases}
         1/(2^{m-j+1}), & i = s_j,\\
         0, & i \notin S^{inf.}(G(\theta')).
         \end{cases}
     \end{align*}
 \end{enumerate}
\end{framed}
Note that the Geometric Mechanism assigns each influential node a selection probability related to her ranking in the influential set. Besides, an agent' probability is decreasing when her progeny is increasing. This is reasonable because if an influential node $j$ is one of the progeny of another influential node $i$, the contribution of agent $i$ partially relies on $j$. To guarantee efficiency and incentive compatibility simultaneously, we assign a higher probability to agent $j$ compared to agent $i$. We give an example to illustrate how our mechanism works below.
\begin{example}
Consider the network $G$ shown in Figure~\ref{fig:example}.
\begin{figure}[!htb]
  \centering
  \includegraphics[width=.5\linewidth]{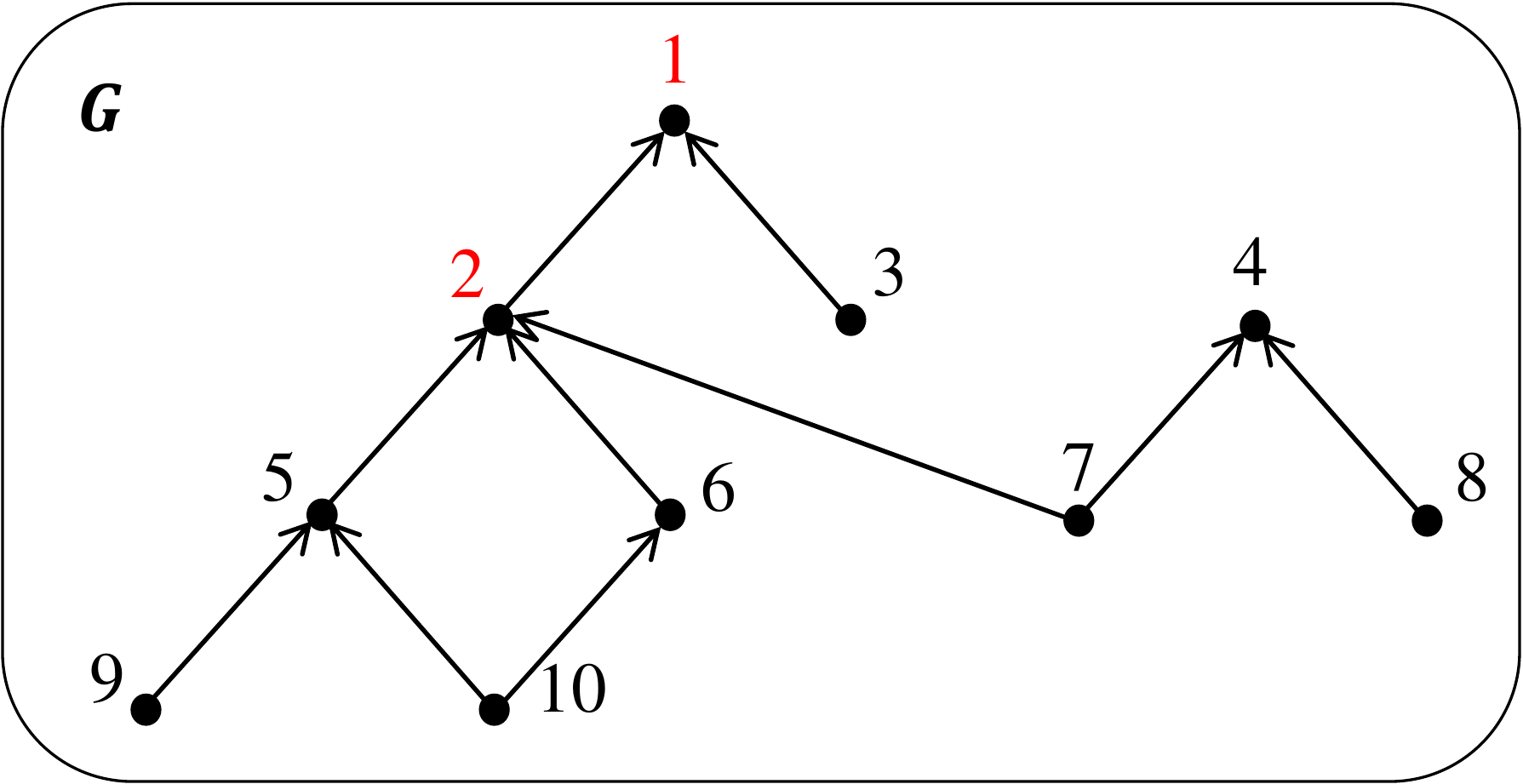}
  \caption{An example for the Geometric Mechanism.}
  \label{fig:example}
\end{figure}
We can observe that only agents $1$ and $2$ will have the largest progeny in the graphs when they have no out-edges respectively. Thus, the influential set is $S^{inf.}(G)=\{1,2\}$. Since $p_1\succ p_2$, then according to the probability assignment defined in the Geometric Mechanism, we choose agent $1$ with probability $1/4$, choose agent $2$ with probability $1/2$ and choose no agent with probability $1/4$. The expected progeny chosen by the Geometric Mechanism in this graph is
\[ \mathbb{E}[p] = \frac{1}{2}\times 6 + \frac{1}{4} \times 8 = 5. \]

On the other hand, the largest progeny is given by agent $1$, which is $8$, so that the expected ratio of the Geometric Mechanism in this graph is $5/8$.

\end{example}

Next, in Theorems~\ref{thm:GM p} and~\ref{thm:GM up}, we show that our mechanism satisfies the properties of incentive compatibility and fairness and has an approximation ratio of $1/2$ in the family of DAGs.

\begin{theorem}\label{thm:GM p}
In the family of DAGs, the Geometric Mechanism satisfies incentive compatibility and fairness.
\end{theorem}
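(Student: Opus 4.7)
My plan is to dispose of fairness first, then handle incentive compatibility by splitting on whether the deviator lies in the influential set at the truthful profile.

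Fairness is essentially immediate from the form of the mechanism. The probability assigned to the most influential node $s_1$ is always $x_{s_1}=1/2^m$ where $m=|S^{inf.}|$, a quantity that depends only on the size of the influential set. Whenever $S^{inf.}(G)=S^{inf.}(G')$ have the same size $m$, these probabilities coincide, and the additional hypothesis $G(s_1)=G'(s_1)$ in the definition of fairness is not even needed.

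For incentive compatibility, I fix an agent $i$, others' reports $\theta_{-i}'\in\Phi(\theta_{-i})$, and a misreport $\theta_i'\subseteq\theta_i$. If $i\notin S^{inf.}(G((\theta_i,\theta_{-i}')))$, then Observation~\ref{ob:set} guarantees $i\notin S^{inf.}(G((\theta_i',\theta_{-i}')))$ as well, and $x_i=0$ in both profiles. Otherwise write $i=s_j$ in the original influential set $\{s_1,\ldots,s_m\}$. Looking at Definition~\ref{def:inf}, the predicate ``$i$ is influential'' is decided by comparing $p_i((\theta_{-i}',\emptyset))$ with the other $p_t((\theta_{-i}',\emptyset))$ in the graph that erases $i$'s own out-edges, so it is entirely independent of $\theta_i'$, and $i$ is still in the new influential set. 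What I must establish is that $i$'s rank $m-j+1$ from the bottom of the influential set cannot strictly decrease.

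The crux, and the step I expect to be the main obstacle, is the structural claim that every $s_k$ with $k>j$ stays in the new influential set with its (global) progeny unchanged. Here I would invoke Observation~\ref{ob:onepath}: the influential set lies on a single directed path in increasing order of progeny, so $s_k\to\cdots\to s_j=i$ for all $k>j$, while acyclicity of the DAG rules out any path from $i$ back to $s_k$. Consequently removing any of $i$'s out-edges cannot alter the set of ancestors of $s_k$, so $p_{s_k}((\theta'_{-s_k},\emptyset))$ equals its original value; for every other node $t$, the quantity $p_t((\theta'_{-s_k},\emptyset))$ can only weakly decrease when edges are removed. The defining inequality $p_{s_k}\succ p_t$ is therefore preserved and $s_k$ remains influential. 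The analogous observation that $p_i$ depends only on in-edges to $i$ shows that $p_i$ is also unchanged.

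Putting the pieces together, the relative $\succ$-order among $s_j,s_{j+1},\ldots,s_m$ in the new influential set is identical to the old one, because their progenies and indices are unchanged. Hence in the new set of size $m'$, if $i$ is ranked $j'$ from the top, the $m-j$ nodes $s_{j+1},\ldots,s_m$ all sit strictly below $i$, giving $m'-j'+1\geq m-j+1$. Substituting into the probability formula yields $x_i((\theta_i',\theta_{-i}'))=1/2^{m'-j'+1}\leq 1/2^{m-j+1}=x_i((\theta_i,\theta_{-i}'))$, which is exactly incentive compatibility.
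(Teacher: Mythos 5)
Your proposal is correct and follows essentially the same route as the paper: fairness is immediate because $x_{s_1}$ depends only on $|S^{inf.}|$, and incentive compatibility is proved by combining Observation~\ref{ob:set} (non-influential agents stay out) with the fact that, for an influential agent $i$, the set of influential nodes ranked below her can only grow when she deletes out-edges (via Observation~\ref{ob:onepath} and acyclicity), so $x_i = 1/2^{(\#\text{nodes below }i)+1}$ can only decrease. Your treatment of the tie-breaking and of why the lower-ranked $s_k$ retain their progeny is slightly more explicit than the paper's, but the decomposition and key lemmas are the same.
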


\begin{proof}
In the following, we give the proof for these properties separately.

\noindent
\textbf{Incentive Compatibility.} Given a set of agents $N = \{1,2,\cdots,n\}$, their true type $\theta \in \Theta^n$ and their report profile $\theta' \in \Phi(\theta)$, let $G(\theta')$ be the graph constructed by $\theta'$, and $S^{inf.}(G(\theta'))$ be the influential set in $G(\theta')$. To prove that the mechanism is incentive compatible, we need to show that $x_i((\theta_{-i}',\theta_i)) \geq x_i((\theta_{-i}',\theta_i'))$ holds for all agent $i \in N$.
\begin{itemize}
    \item According to Observation~\ref{ob:set}, for agent $i \notin S^{inf.}(G((\theta_{-i}',\theta_i)))$, she cannot misreport to make herself be an influential node. Thus, her selection probability will always be zero.
    \item  If agent $i \in S^{inf.}(G((\theta_{-i}',\theta_i)))$, she cannot misreport to make herself be out of the influential set. Suppose $ S^{inf.}(G((\theta_{-i}',\theta_i))) = \{s_1,\cdots,s_m\}$ and $i = s_l$, $1\leq l\leq m$. Denote the set of influential nodes in her progeny when she truthfully reports by $S_i((\theta_{-i}',\theta_i)) = \{j \in S^{inf.}(G((\theta_{-i}',\theta_i))) \mid  p_i ((\theta_{-i}',\theta_i))$ $ \succ p_j((\theta_{-i}',\theta_i)) \}$. Then agent $i$'s selection probability in the graph $G((\theta_{-i}',\theta_i))$ is $x_i((\theta_{-i}',\theta_i)) = 1/({2^{m-l+1}}) = 1/(2^{|S_i((\theta_{-i}',\theta_i))|+1})$.
    
    When she misreports her type as $\theta_i' \subset \theta_i$, i.e., deleting a nonempty subset of her real out-edges, $p_j((\theta_{-j}',\emptyset)) \succ p_k((\theta_{-j}',\emptyset))$ still holds for all $j \in S_i((\theta_{-i}',\theta_i))$, all $k \in N$ and $k\neq j$. This can be inferred from Observation~\ref{ob:onepath}, agent $j$ is one of the progeny of agent $i$ for all $j \in S_i$. Thus, agent $i$'s report will not change agent $j$'s progeny. Moreover, some other agent $t \in P_i$ may become an influential node in the graph $G((\theta_{-i}',\theta_i'))$, since $\max_{k\in N, k\neq t}p_k((\theta_{-t}',\emptyset))$ may be decreased and $p_t((\theta_{-t}',\emptyset))$ keeps unchanged. Then we can get $S_i((\theta_{-i}',\theta_i)) \subseteq S_i((\theta_{-i}',\theta_i'))$, which implies that $x_i((\theta_{-i}',\theta_i)) = 1/{2^{|S_i((\theta_{-i}',\theta_i))|+1}} \geq x_i((\theta_{-i}',\theta_i'))= 1/{2^{|S_i((\theta_{-i}',\theta_i'))|+1}}$.
\end{itemize}

Thus, no agent can increase her probability by misreporting her type and the Geometric Mechanism satisfies incentive compatibility.

\noindent
\textbf{Fairness.} For any two graph $G,G' \in \mathcal{G}^n$, if their influential sets and the subgraphs constructed by the progeny of the most influential node are both the same, i.e., $S^{inf.}(G) = S^{inf.}(G') = \{s_1,\cdots,s_m\}$ and $G(s_1) = G'(s_1)$, according to the definition of Geometric Mechanism, agent $s_1$ will always get a selection probability of $1/2^m$. Therefore, the Geometric Mechanism satisfies fairness.
\end{proof}

\begin{theorem}\label{thm:GM up}
In the family of DAGs, the Geometric Mechanism can achieve an approximation ratio of $1/2$.
\end{theorem}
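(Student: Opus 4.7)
The plan is to reduce the theorem to a single inequality about the progeny values of the influential nodes, and then prove that inequality in one short step. Fix the true profile $\theta$, let $S^{inf.}(G(\theta)) = \{s_1,\ldots,s_m\}$, and write $a_j := p_{s_j}(\theta)$. By Observation~\ref{ob:mosti}, $a_1 = p^*(\theta)$, and the expected progeny selected by the Geometric Mechanism is exactly $\sum_{j=1}^m a_j/2^{m-j+1}$, so the theorem reduces to showing
\[
\sum_{j=1}^m \frac{a_j}{2^{m-j+1}} \geq \frac{a_1}{2}.
\]

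The key lemma I plan to establish is $a_j \geq a_1/2$ for every $j \geq 2$. To see this, apply Definition~\ref{def:inf} to $s_j$ with $\theta_{-s_j}' = \theta_{-s_j}$: we get $p_{s_j}((\theta_{-s_j},\emptyset)) \succ p_{s_1}((\theta_{-s_j},\emptyset))$. The left-hand side equals $a_j$, since deleting $s_j$'s own out-edges does not change the set of agents that can reach $s_j$. For the right-hand side, any agent that is lost from $P_{s_1}$ when $s_j$'s out-edges are deleted must have had every one of its paths to $s_1$ routed through an out-edge of $s_j$, so such an agent necessarily lies in $P_{s_j}$; hence at most $|P_{s_j}| = a_j$ agents are lost and $p_{s_1}((\theta_{-s_j},\emptyset)) \geq a_1 - a_j$. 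The $\succ$ inequality (which in the worst case degenerates to an equality broken by the index order) then forces $a_j \geq a_1 - a_j$, i.e.\ $a_j \geq a_1/2$.

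With the lemma in hand, the theorem follows by separating the $j=1$ term, applying the lemma to the remaining terms, and summing the geometric series $\sum_{k=1}^{m-1} 2^{-k} = 1 - 2^{-(m-1)}$:
\begin{align*}
\sum_{j=1}^m \frac{a_j}{2^{m-j+1}}
&\geq \frac{a_1}{2^m} + \frac{a_1}{2}\sum_{j=2}^m \frac{1}{2^{m-j+1}}
= \frac{a_1}{2^m} + \frac{a_1}{2}\left(1 - \frac{1}{2^{m-1}}\right)
= \frac{a_1}{2}.
\end{align*}
Dividing by $a_1 = p^*(\theta)$ yields $R(G(\theta)) \geq 1/2$, and the boundary case $m = 1$ is a one-line check (the single term already equals $a_1/2$). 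The main obstacle is the bound $p_{s_1}((\theta_{-s_j},\emptyset)) \geq a_1 - a_j$: the content is that in a DAG, deleting the out-edges of a single vertex $s_j$ can only disconnect from $s_1$ those ancestors of $s_1$ whose $s_1$-paths all passed through $s_j$, and every such ancestor already belongs to $P_{s_j}$. This is a short path-chasing argument and, as far as I can see, is the only substantive step in the proof.
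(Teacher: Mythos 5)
Your proposal is correct and follows essentially the same route as the paper's proof: the same decomposition into the $s_1$ term plus the geometric series, and the same key inequality $p_{s_j} \geq p_{s_1}/2$ for $j\geq 2$ derived from the fact that $s_j$ remains an influential node. You merely spell out more explicitly than the paper does why deleting $s_j$'s out-edges costs $s_1$ at most $p_{s_j}$ progeny, which is a welcome clarification but not a different argument.
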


\begin{proof}
Given a graph $G = (N,E) \in \mathcal{G}$ and its influential set $S^{inf.}(G) = \{s_1,\cdots,s_m\}$, the maximum progeny is $p^* = p_{s_1}$. Then the expected ratio should be
\begin{align*}
    R &= \frac{\mathbb{E}[p]}{p^*}
    = \frac{\sum_{i \in S^{inf.}(G) }x_i p_i}{p^*} \\
    &= \frac{\sum_{i=1 }^m 1/(2^{m-i+1}) \cdot p_{s_i}}{p^*} \\
    &=  \sum_{i=2}^{m} \frac{1}{2^{m-i+1}} \cdot \frac{p_{s_i}}{p_{s_1}} + \frac{1}{2^m} \cdot \frac{p_{s_1}}{p_{s_1}} \\
    &\geq \sum_{j = 1}^{m-1} \frac{1}{2^j} \cdot \frac{1}{2} + \frac{1}{2^m} \\
    &= \frac{1}{2} - \frac{1}{2^m} + \frac{1}{2^m} = \frac{1}{2}.
\end{align*}
The inequality holds since $p_{s_i}/ p_{s_1} \geq \frac{1}{2}$ holds for all $1\leq i \leq m-1$. This can be inferred from Observation~\ref{ob:onepath}, agent $s_i$ is one of agent $s_1$'s progeny for all $i > 1$. If $p_{s_i}/ p_{s_1} < \frac{1}{2}$, then we will have $p_{s_i}((\theta_{-{s_i}},\emptyset)) \prec p_{s_1}((\theta_{-s_i},\emptyset))$, which contradicts with that $s_i \in S^{inf.}(G)$.

The expected ratio holds for any directed acyclic graph, which means that
\[r_{\mathcal{M}_G} = \min_{G \in \mathcal{G}} R(G) = \frac{1}{2}.\]
Thus we complete the proof.
\end{proof}

\section{Upper Bound and Related Discussions}\label{charater}
In this section, we further give an upper bound for any incentive compatible and fair selection mechanisms in Theorem~\ref{theorem:ub}. After that, we consider a special class of selection mechanisms, called root mechanisms (detailed in Section~\ref{openproblems}), which contains the Geometric Mechanism. Then, we propose two conjectures on whether root mechanisms and fairness will limit the upper bound of the approximation ratio.

\subsection{Upper Bound}\label{ub}
We prove an upper bound for any IC and fair selection mechanisms as below.
\begin{theorem}\label{theorem:ub}
For any incentive compatible and fair selection mechanism $\mathcal{M}$, $r_{\mathcal{M}} \leq \frac{1}{1+ \ln 2}$.
\end{theorem}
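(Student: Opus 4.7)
The plan is to exhibit, for every IC and fair mechanism $\mathcal{M}$, an adversarial family of DAGs $\{G_m\}_{m \geq 2}$ on which the approximation ratio is driven down to $1/(1+\ln 2)$. I would build each $G_m$ on a common node set so that the influential set has size exactly $m$, the influential nodes $s_1 \succ s_2 \succ \cdots \succ s_m$ lie on a single main path (guaranteed by Observation~\ref{ob:onepath}), and, by hanging appropriately many ``branch'' nodes off each level $s_i$ (each branch contributing to $p_{s_j}$ for all $j \leq i$), tune the progeny ratios $\rho_k := p_{s_k}/p_{s_1}$ so as to sample the admissible interval $(1/2, 1]$ densely, e.g.\ with $\rho_k \approx 1/(1 + (k-1)/(m-1))$.

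For each $k \in \{2,\dots,m\}$, I would apply IC to the misreport in which $s_k$ drops her single out-edge, yielding the graph $G_m^{-s_k}$. This graph decomposes into the subgraph on $s_k$'s progeny (now rooted at $s_k$) and a residual piece rooted at $s_1$; because $p_{s_k} > n/2$ by construction, the $s_k$-side contains more than half the nodes, so $s_k$ becomes the most influential node in $G_m^{-s_k}$, and IC yields $x_{s_k}(G_m) \geq x_{s_k}(G_m^{-s_k})$. Next I would build, for each $k$, a companion graph $H_k$ on the same node set that agrees with $G_m^{-s_k}$ on both $S^{inf.}$ and on the subgraph rooted at $s_k$, but is otherwise trimmed down to as simple a form as possible. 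Fairness applied at $s_k$ (now the most influential node of $H_k$) then forces $x_{s_k}(G_m^{-s_k}) = x_{s_k}(H_k)$. Since the mechanism must also achieve ratio at least $r$ on $H_k$, combining $R(H_k) \geq r$ with $\sum_i x_i(H_k) \leq 1$ yields an explicit lower bound $\alpha_k$ for $x_{s_k}(H_k)$, and hence for $x_{s_k}(G_m)$.

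Finally, I would substitute the inequalities $x_{s_k}(G_m) \geq \alpha_k$ together with $\sum_{k=1}^m x_{s_k}(G_m) \leq 1$ into the identity $R(G_m) = \sum_k x_{s_k}(G_m)\,\rho_k$. The mechanism's optimal response concentrates any surplus probability on $s_1$, giving
\[
R(G_m) \;\leq\; 1 - \sum_{k=2}^m \alpha_k\,(1-\rho_k).
\]
Passing to the limit $m \to \infty$ turns the right-hand side into a continuous expression in which the factor $\int_0^1 dt/(1+t) = \ln 2$ emerges from the chosen progeny profile; imposing $R(G_m) \geq r$ throughout the family and solving yields $r \leq 1/(1+\ln 2)$. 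The main obstacle is the fairness step: the paper's definition of fairness pins down only the probability of the \emph{most} influential node, so each companion graph $H_k$ must be engineered so that $s_k$ itself plays that role while still matching $G_m^{-s_k}$ on $S^{inf.}$ and on $G(s_k)$. Arranging uniform lower bounds $\alpha_k$ simultaneously over all $k$ that together sum to produce precisely $\ln 2$ in the limit, rather than a weaker constant, is the technical crux of the argument.
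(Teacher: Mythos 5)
Your overall strategy is the same as the paper's: the identical adversarial family (a chain of influential nodes whose progeny ratios fill $[1/2,1]$), incentive compatibility applied to the single deviation ``drop my out-edge,'' fairness used to replace each deviation graph by a trimmed companion graph, and the harmonic sum $\sum 1/j \to \ln 2$ at the end. The genuine gap sits exactly at the step you yourself flag as the crux: obtaining the lower bounds $\alpha_k$. For the \emph{least} influential node the argument works, because its companion graph has only one agent with non-vanishing progeny ratio, so $R(H_m)\ge r$ does force $x_{s_m}(H_m)\ge r-o(1)$. But for an intermediate $s_k$, the companion graph $H_k$ still contains $s_{k+1},\dots,s_m$ as influential nodes, and their progeny ratios relative to the new optimum $s_k$ all lie in $[1/2,1]$, with $p_{s_{k+1}}/p_{s_k}\to 1$ under dense sampling. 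The mechanism can therefore meet $R(H_k)\ge r$ by placing its entire budget on $s_{k+1}$ and nothing on $s_k$, so the two constraints ``$R(H_k)\ge r$ and $\sum_i x_i(H_k)\le 1$'' yield no positive $\alpha_k$ for $k<m$. Concretely, the assignment $\alpha_m=2/3$ and $\alpha_k=0$ for all $k<m$, with the surplus probability in each $H_k$ (and in $G_m$) placed on the highest-ratio node present, satisfies every constraint your argument extracts and attains ratio $2/3$ on the whole family; hence your chain of inequalities can prove at best $r\le 2/3$, not $r\le 1/(1+\ln 2)$.

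What is needed is a lower bound of order $r/p_{s_k}$ on $\alpha_k$ for every intermediate $k$ (these are exactly the values for which $\sum_k\alpha_k(1-\rho_k)\to r\ln 2$ and your final display closes the computation), and such a bound must be wrung out of the \emph{system} of constraints across all companion graphs simultaneously, not out of each $R(H_k)\ge r$ in isolation. Be aware that the paper's own proof does not supply this step either: it decrees that only influential nodes receive probability and that the incentive-compatibility inequalities may be taken tight (``we can set $x_i(G')=x_i(G)$''), and then solves the resulting equalized system, whose solution indeed gives $\beta_0=c$, $\beta_{j-k}=c/j$, and $c(1+\ln 2)=1$. Neither that decree nor your version of the step is implied by incentive compatibility and fairness as defined, so this is the point that genuinely requires a new idea rather than a tighter write-up.
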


\begin{proof}
Consider the graph $G = (N,E)$ shown in Figure~\ref{fig:upperbound}, the influential set in $G$ is $S^{inf.}(G) = \{{2k-1}, {2k-2}, \cdots, k\}$. When $k \rightarrow \infty$, for each agent $i$, $i\leq k-1$, their contributions can be ignored, it is without loss of generality to assume that they get a probability of zero, i.e., $x_{i}(G)=0$. Then, applying a generic incentive compatible and fair mechanism $\mathcal{M}$ in the graph $G$, assume that $x_{i}(G) = \beta_{i-k}$ is the selection probability of agent $i$, $ \beta_{i-k} \in [0,1]$ and $ \sum_{i = k}^{2k-1} \beta_{i-k} \leq 1$.

For each agent $i \in N$, set $N_i = P_i(G)$, $N_{-i} = N\setminus N_i$, $E_i = \{(j,k) \mid j,k \in I_i, (j,k) \in E\}$ and $E_{-i} = E\setminus \{E_i \cup \theta_i\}$. Define a set of graphs $\mathcal{G}_i = \{G'= (G(i);G(-i)) \mid G(-i) = (N_{-i},E_{-i}'), E_{-i}' \subseteq E_{-i}\}$. Then for any graph $G' \in \mathcal{G}_i$, it is generated by deleting agent $i$'s out-edge and a subset of out-edges of agent $i$'s parent nodes, illustrated in Figure~\ref{fig:upperbound}. For any $i \geq k$ and any graph $G' \in \mathcal{G}_i$, the influential set in the graph $G'$ should be $S^{inf.}(G') = \{i,{i-1},\cdots,k\}$. 

\begin{figure}[!htb]
  \centering
  \includegraphics[width=.6\linewidth]{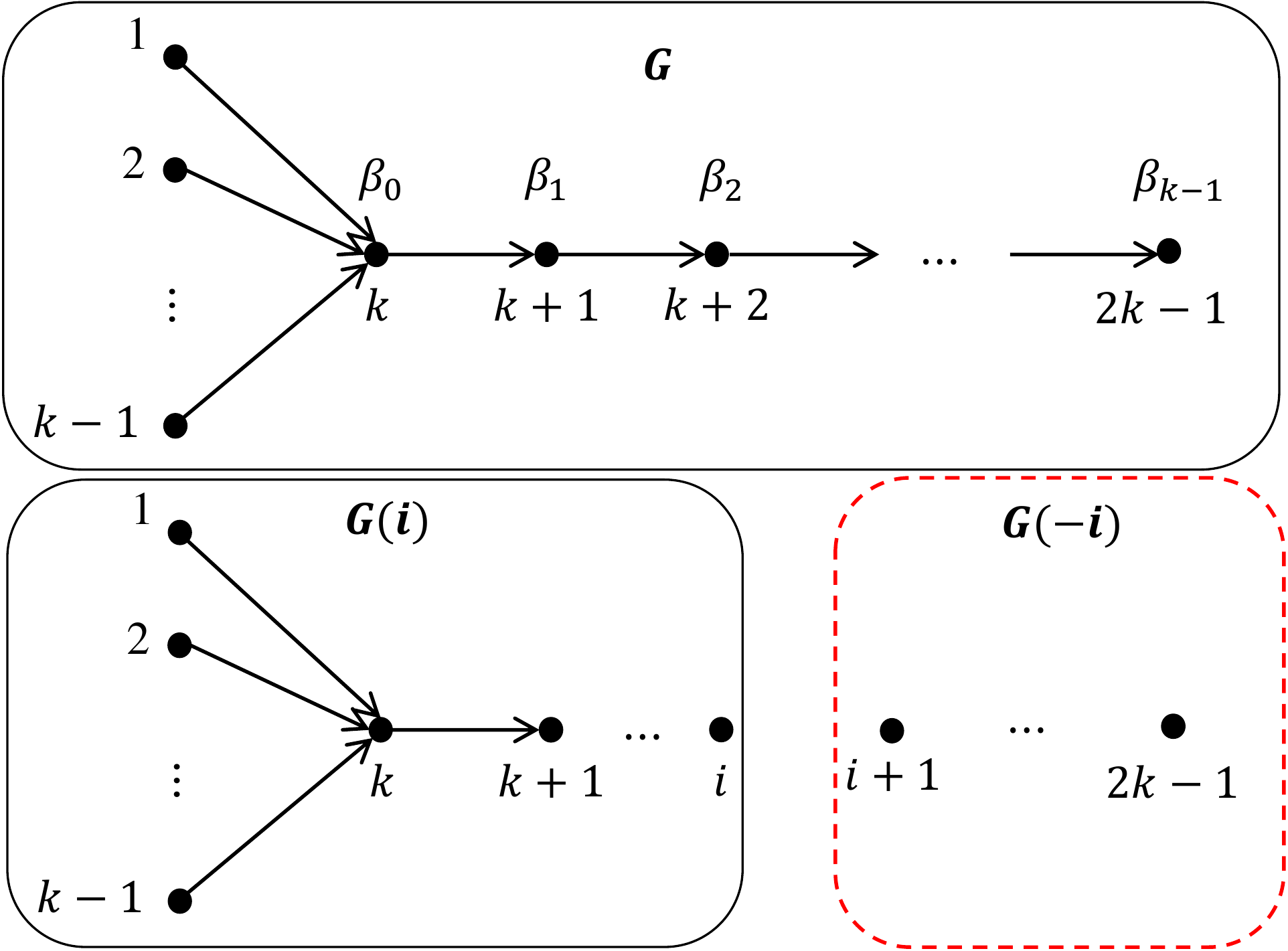}
  \caption{The upper part is the origin graph $G$. The bottom part is an example in $\mathcal{G}_i$: for any $i \geq k$, any graph $(G(i);G(-i)) \in \mathcal{G}_i$, the graph $(G(i);G(-i))$ is generated by dividing $G$ into two parts. Then, $G(i)$ is generated by keeping the same as the first part, $G(-i)$ is then generated by deleting some of the edges in the second part. Note that there is no edge between $i$ and $i+1$.}
  \label{fig:upperbound}
\end{figure}

To get the upper bound of the approximation ratio, we consider a kind of ``worst-case" graphs where the contributions of agents except influential nodes can be ignored when $k \rightarrow \infty$. Since the mechanism $\mathcal{M}$ satisfies the fairness, it holds that $x_{i}(G') = x_{i}(G'')$ for any two graphs $G', G'' \in \mathcal{G}_i$. Then for any graph $G' \in \mathcal{G}_k$, agent $k$ is assigned the same probability. Thus, we can find that in the graph set $\mathcal{G}_k$, the ``worst-case" graph $G_k$ is a graph where there are only edges between $k$ and $i$, $1\leq i \leq k-1$ (shown in Figure~\ref{fig:worstcase}).

\begin{figure}[!htb]
  \centering
  \includegraphics[width=.6\linewidth]{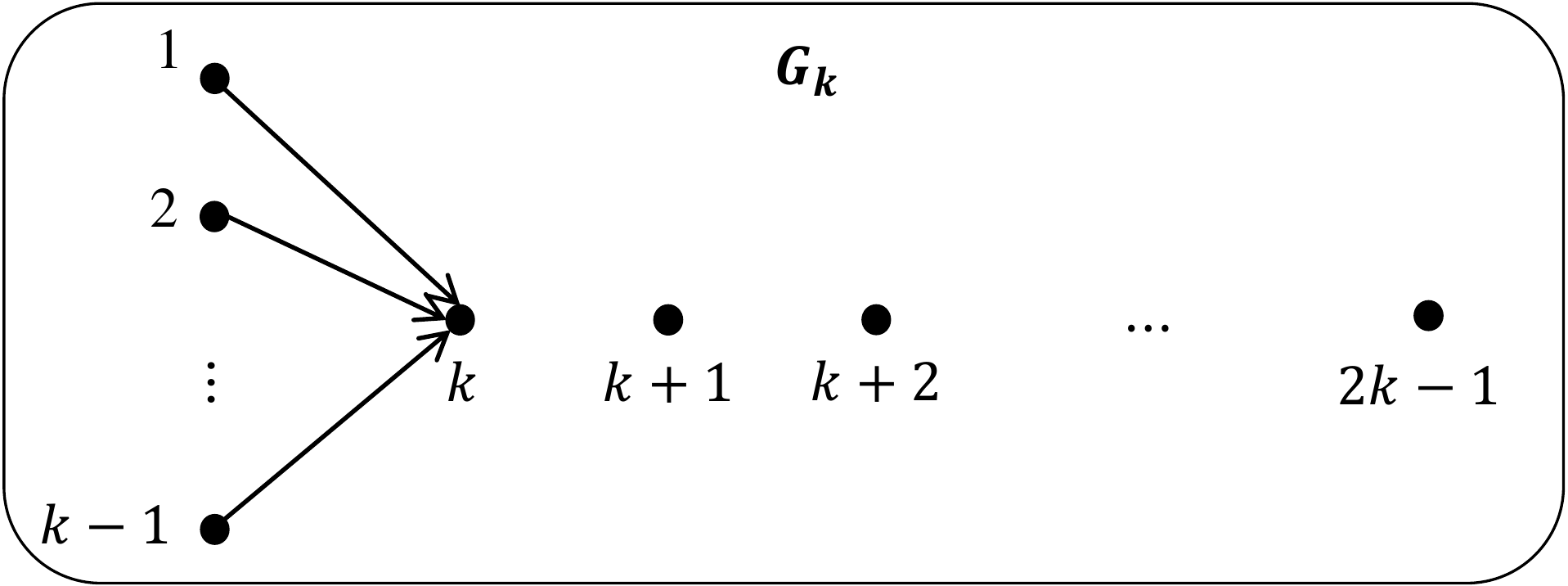}
  \caption{The ``worst-case" graph $G_k$ in the set $\mathcal{G}_k$.}
  \label{fig:worstcase}
\end{figure}
Since no matter how much the probability the mechanism assigns to other agents, the expected ratio for the graph $G_k$ approaches the probability $x_{k}(G_k)$ when $k \rightarrow \infty$, i.e., 
\[\lim_{k \rightarrow \infty} R(G_k) \leq \lim_{k \rightarrow \infty} x_{k}(G_k) + \frac{1}{k}\cdot (1-x_{k}(G_k)) = x_{k}(G_k).\]
The inequality holds since $\sum_{i=1}^{2k-1} x_{i}(G_k) \leq 1$. Similarly, for any $k<j \leq 2k-2$, the ``worst-case" graph $G_j$ in $\mathcal{G}_j$ is the graph where the out-edge of agent $i$ is deleted for all $i \geq j$. When $k \rightarrow \infty$, the expected ratio in the graph $G_j$ is
\[\lim_{k \rightarrow \infty} R(G_j) \leq \lim_{k \rightarrow \infty} \sum_{i=k}^j x_{i}(G_j)\cdot \frac{i}{j} + \frac{1}{j}\cdot \left(1-\sum_{i=k}^j x_{i}(G_j)\right) = \sum_{i=k}^j x_{i}(G_j)\cdot \frac{i}{j}.\]
Therefore, in these ``worst-case" graphs, we assume that only influential nodes can be assigned positive probabilities. Suppose that for the graph $G_j$, $k \leq j \leq 2k-2$, an influential node $i$ gets a probability of $x_{i} = \beta_{i-k}^{(j-i)}$ for $k \leq i \leq j$. 

Since the mechanism $\mathcal{M}$ is incentive compatible, it holds that $x_{i}(G) \geq x_{i}(G')$ for all $G' \in \mathcal{G}_i$ and all $i \in N$. To maximize the expected progeny of the selected agent in all graphs, we can set $x_{i}(G') = x_{i}(G)$ for all $G' \in \mathcal{G}_i$ and all $i \in N$. Similarly, it also holds that $x_{i}(G'') \geq x_{i}(G')$ for any $i \in N$, any $G' \in \mathcal{G}_i$, any $G'' \in \mathcal{G}_j$ and $k \leq i < j \leq 2k-1$. When $k \rightarrow \infty$, we can compute the performance of the mechanism $\mathcal{M}$ in different graphs as the following.
\begin{align*}
    &R(G_{j}) = \sum_{i=k}^{j} \beta_{i-k}^{(j-i)} \cdot \frac{i}{j}, k\leq j \leq 2k-2,\\
    &R(G) = \sum_{i=k}^{2k-1} \beta_{i-k} \cdot \frac{i}{2k-1}, 
\end{align*}
with $\beta_{i-k}^{(j-i)} \geq \beta_{i-k}^{(0)}$, $\beta_{i-k}^{(0)} = \beta_{i-k}$, $k\leq i \leq 2k-1$, $k\leq j \leq 2k-2$. The approximation ratio of the mechanism $\mathcal{M}$ should be at most the minimum of $R(G_j)$ and $R(G)$ for $k \leq j \leq 2k-2$, i.e.,
\begin{align*}
    r_{\mathcal{M}} \leq \min \left\{\beta_{0}^{(0)}, \beta_{0}^{(1)}\cdot \frac{k}{k+1} + \beta_{1}^{(0)}, \cdots, \beta_0\cdot \frac{k}{2k-1} + \beta_1\cdot \frac{k+1}{2k-1}+\cdots+\beta_{k-1}\right\}.
\end{align*}
Then we can choose $\beta_{i-k}^{(j-i)}$ to achieve the highest minimum expected ratio. We find that $r_{\mathcal{M}} \leq \frac{1}{1+\ln 2}$ and the equation holds when $k \rightarrow \infty$ and 
\begin{equation*}
    \begin{cases}
    \beta_{i-k}^{(j-i)} = \beta_{i-k}, \\
    \beta_0 + \beta_1 + \cdots + \beta_{k-1} = 1, \\
    \beta_0^{(0)} = \beta_0^{(1)}\cdot \frac{k}{k+1} + \beta_{1}^{(0)} = \cdots = \beta_0 \cdot \frac{k}{2k-1} + \beta_1\cdot \frac{k+1}{2k-1}+\cdots+\beta_{k-1}.
    \end{cases}
\end{equation*}
\end{proof}

\subsection{Open Questions}\label{openproblems}
Note that the approximation ratio of the Geometric Mechanism is close to the upper bound we prove in Section~\ref{ub}. However, there is still a gap between them. In this section, we suggest two open questions which narrow down the space for finding the optimal selection mechanism.

\subsubsection{Root Mechanism.}
Recall that our goal in this paper is to maximize the approximation ratio between the expected progeny of the selected agent and the maximum progeny. If requiring incentive compatibility, a selection mechanism cannot simply select the most influential node. However, we can identify a subset of agents who can pretend to be the most influential node. This is the influential set we illustrate in Definition~\ref{def:infset}, and we show that agents cannot be placed into the influential set by misreporting as illustrated in Observation~\ref{ob:set}. Utilizing this idea, we see that if we assign positive probabilities only to these agents, then the selected agent has a large progeny, and agents have less chance to manipulate. We call such mechanisms as \emph{root mechanisms}.

\begin{definition}
A root mechanism $\mathcal{M}_R$ is a family of functions $f_R: \Theta^n \rightarrow [0,1]^{n}$ for all $n \in \mathbb{N}^*$. Given a set of agents $N$ and their report profile $\theta'$, a root mechanism $\mathcal{M}_R$ only assigns positive selection probabilities to agents in the set $S^{inf.}(G(\theta'))$. Let $x_i(\theta')$ be the probability of selecting agent $i\in N$. Then $x_i(\theta') = 0$ for all $i \notin S^{inf.}(G(\theta'))$,  $x_i(\theta') \in [0,1]$ for all $i \in N$ and $\sum_{i\in N} x_i(\theta') \leq 1$.
\end{definition}
It is clear that our Geometric Mechanism is a root mechanism, whose approximation ratio is not far from the upper bound of $1/(1+\ln 2)$. We conjecture that an optimal incentive compatible selection mechanism and an optimal incentive compatible root mechanism share the same approximation ratio bound.
\begin{conjecture}
If an optimal incentive compatible root mechanism $\mathcal{M}_R$ has an approximation ratio of $r_{\mathcal{M}_R}^*$, there does not exist other incentive compatible selection mechanism that can achieve a strictly better approximation ratio. 
\end{conjecture}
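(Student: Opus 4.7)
The plan is to prove the conjecture constructively: given any IC mechanism $\mathcal{M}$ with approximation ratio $r$, I would build an IC root mechanism $\mathcal{M}_R'$ with ratio at least $r$. Applied to an optimal $\mathcal{M}$, this yields $r_{\mathcal{M}_R}^* \geq r_{\mathcal{M}}^*$, and the reverse inequality is trivial since every root mechanism is an IC mechanism.

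The construction would operate graph by graph. For each $G$, set $x'_i(G) = 0$ for every $i \notin S^{inf.}(G)$ and redistribute the freed mass $q(G) = \sum_{i \notin S^{inf.}(G)} x_i(G)$ among $\{s_1,\ldots,s_m\}$ via a rule $\phi(s_j, G) \geq 0$ with $\sum_j \phi(s_j, G) = 1$. Zeroing non-influential probabilities automatically preserves IC for those agents by Observation~\ref{ob:set}, since no amount of hiding can make a non-influential agent influential. The real work is choosing $\phi$ so that (a) IC continues to hold for every influential agent under every manipulation, and (b) the expected progeny in each graph does not decrease.

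For (a), the essential fact from the proof of Theorem~\ref{thm:GM p} is that when $i = s_l$ hides edges, her rank relative to the other influentials can only improve, formalized by $S_i(G) \subseteq S_i(G')$. Hence any $\phi$ that depends only on $|S_i|$ and is non-increasing in it --- the Geometric Mechanism being the canonical instance --- produces a probability schedule monotone under manipulations. For (b), the bound $p_{s_j} \geq p^*(G)/2$ derived in the proof of Theorem~\ref{thm:GM up} is the key lever, guaranteeing that shifting mass onto any influential node costs at most a factor of two in expected progeny.

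The hard part will be reconciling (a) and (b) globally: IC pushes mass toward the low-progeny end of $S^{inf.}$ while the approximation ratio is best served by concentrating on $s_1$. I expect no fixed closed-form $\phi$ depending only on $(j, m)$ will suffice, and that $\phi$ must be sensitive to the individual $p_{s_j}$ values and to $q(G)$, perhaps defined implicitly through a fixed-point argument that simultaneously solves the IC constraints across the graphs coupled by manipulations. Should a direct construction resist, an alternative route is to encode the problem as a (potentially infinite-dimensional) linear program over $(x_i(G))_{G,i}$ with feasibility and IC constraints, and to argue via duality that an optimum exists supported entirely on $\bigcup_G S^{inf.}(G)$. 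A preliminary computational sanity check on DAGs of small size would help confirm the conjecture and guide the eventual design of $\phi$.
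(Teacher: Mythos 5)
This statement is one of the paper's open conjectures: the paper supplies only a heuristic ``Discussion'' (zeroing the probability of non-influential agents cannot break incentive compatibility, by Observation~\ref{ob:set}), not a proof. Your plan starts from the same observation and is more explicit about what a proof would require, but it does not close the conjecture, and two of its steps would fail as stated. First, the redistribution step does not preserve incentive compatibility. If the new probability of an influential node $i=s_l$ in $G$ is $x_i(G)+q(G)\,\phi(s_l,G)$, then under a misreport producing $G'$ you control $x_i(G)\geq x_i(G')$ (IC of $\mathcal{M}$) and, for a $\phi$ non-increasing in $|S_i|$, $\phi(i,G)\geq\phi(i,G')$; but you have no control over the freed mass $q(G')$, which is determined by how much probability $\mathcal{M}$ happened to place on non-influential agents in $G'$ and can be much larger than $q(G)$. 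The product $q(G')\phi(i,G')$ can then exceed $q(G)\phi(i,G)$, so agent $i$ gains by misreporting. Second, your efficiency claim (b) is not attainable in the per-graph form you state: a non-influential agent can have progeny arbitrarily close to $p^*$ (e.g., the root of a large disjoint subtree that is not on the path through $S^{inf.}$), so moving its mass onto influential nodes with progeny near $p^*/2$ strictly decreases the expected progeny in that graph. Your own remark that the shift ``costs at most a factor of two'' concedes this; a factor-two loss proves only $r^*_{\mathcal{M}_R}\geq r^*_{\mathcal{M}}/2$, not the conjecture. The LP-duality fallback, as written, is a restatement of the desired conclusion (an optimum supported on $\bigcup_G S^{inf.}(G)$) rather than an argument for it.

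To be clear about what a correct resolution needs: since the approximation ratio is a worst-case quantity over graphs, you do not need per-graph dominance, only that the minimum over graphs does not drop; and the IC constraints couple graphs that differ by a single agent's edge deletions, so any redistribution rule must be analyzed across these coupled families simultaneously (this is exactly the structure exploited in the proof of Theorem~\ref{theorem:ub}). Your proposal correctly identifies these as the hard points but leaves them open, so it should be presented as a proof strategy for the conjecture, not a proof.
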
\label{up root}
\begin{proof}[Discussion]
An optimal incentive compatible selection mechanism will usually try to assign more probabilities to agents with more progeny. Following this way, we assign zero probability to all agents who are not an influential node and find a proper probability distribution for the influential set, rather than giving non-zero probabilities to all agents. Since any agent who is not an influential node cannot make herself in the influential set when other agents' reports are fixed, this method will not cause a failure for incentive compatibility.
\end{proof}
\subsubsection{Fairness.}\label{consistency}
Note that the upper bound of $1/(1+\ln 2)$ is for all incentive compatible and fair selection mechanisms. We should also consider whether an incentive compatible selection mechanism can achieve a better approximation ratio without the constraint of fairness. Here, we conjecture that an incentive compatible selection cannot achieve an approximation ratio higher than $1/(1+\ln 2)$ if the requirement of fairness is relaxed. 

\begin{conjecture}
If an optimal incentive compatible and fair mechanism $\mathcal{M}$ can achieve an approximation ratio of $r_{\mathcal{M}}^*$, there does not exist other incentive compatible mechanism with a strictly higher approximation ratio.
\end{conjecture}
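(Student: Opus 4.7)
The plan is to establish the conjecture by a symmetrization reduction: given any incentive compatible mechanism $\mathcal{M}$ with approximation ratio $r$, construct an IC and fair mechanism $\mathcal{M}'$ whose approximation ratio is at least $r$. Combined with the optimality of $r_{\mathcal{M}}^{*}$ among IC and fair mechanisms, this forces $r \leq r_{\mathcal{M}}^{*}$ and rules out a strictly better unconstrained IC mechanism.

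I would first formalize the fairness equivalence: declare $G \sim G'$ whenever $S^{inf.}(G) = S^{inf.}(G')$ and $G(s_1) = G'(s_1)$. Two structural facts make this equivalence friendly to the approximation analysis. First, by Observation~\ref{ob:onepath} every influential node lies inside $P_{s_1}$, so its progeny is determined by $G(s_1)$ alone and is therefore constant across a $\sim$-class; in particular $p^{*}$ is class-invariant. Second, by Observation~\ref{ob:set} no outside agent can manipulate into the influential set, so in the spirit of Conjecture~1 it is natural to restrict attention to mechanisms that assign positive probability only to $S^{inf.}$, which suffices to make the expected progeny depend only on the class of the input graph.

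Define $\mathcal{M}'$ by averaging $\mathcal{M}$ over each class, $x'_{j}(G) = \mathrm{avg}_{G'' \sim G}\, x_{j}(G'')$. Then $\mathcal{M}'$ is fair by construction. By linearity of expectation together with class-invariance of both $p_{s_j}$ and $p^{*}$, the ratio $R_{\mathcal{M}'}(G)$ equals the class-average of $R_{\mathcal{M}}(G'')$, and hence is bounded below by $r$. Applying Theorem~\ref{theorem:ub} (or more precisely the optimal fair bound $r_{\mathcal{M}}^{*}$) then finishes the argument.

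The main obstacle, and the step I expect to dominate the proof, is showing that $\mathcal{M}'$ remains incentive compatible. IC links probabilities across graphs that differ by one agent's edge deletion, and such a deletion can move the graph into a different fairness class by shrinking $P_{s_1}$, changing the identity of $s_1$, or enlarging $S^{inf.}$. In the benign case where the manipulator lies outside $P_{s_1}$, the deletion acts only on the yellow region of Figure~\ref{fig:consistency} and maps the fairness class bijectively onto itself, so averaging is transparently IC-preserving. The delicate subcases are when the manipulator lies inside $P_{s_1}$ or already in the influential set; here I would try to replace the plain average with a coupling between the classes on either side of the IC inequality, pairing each pre-manipulation representative with a matching post-manipulation representative so that the IC inequality survives the averaging. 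Constructing such a coupling uniformly across cases, and simultaneously justifying the reduction to root mechanisms if Conjecture~1 is not available as a black box, is where the bulk of the work would lie.
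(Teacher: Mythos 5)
Be careful about the status of this statement: it is labelled a \emph{conjecture}, and the paper does not prove it. What follows the conjecture in the paper is only a ``Discussion'' sketching a possible attack, namely to reassign probabilities within each class of graphs that agree on the number of nodes, the influential set, and the subgraph $G(s_1)$, so that $x_{s_1}$ becomes class-invariant without damaging incentive compatibility or efficiency. Your symmetrization plan is essentially that same strategy, worked out in somewhat more detail: you correctly note (via Observation~1 and Observation~2) that all influential nodes lie in $P_{s_1}$, so their progenies and $p^*$ are constant on each fairness class, which is what makes the class-averaged ratio at least the original ratio.

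However, your proposal is not a proof, and you have accurately located the reason: incentive compatibility couples probabilities across graphs that lie in \emph{different} fairness classes (a deletion by an agent inside $P_{s_1}$ or inside $S^{inf.}$ changes $G(s_1)$ or the influential set), and nothing in the averaging construction guarantees that the averaged probabilities still satisfy the IC inequalities across class boundaries. The coupling you gesture at is exactly the missing ingredient, and it is the same ingredient the authors admit they do not have. There is a second unresolved dependency: your claim that the class-average of the expected progeny equals the expected progeny of the averaged mechanism requires that positive probability be placed only on agents whose progeny is class-invariant, i.e., a reduction to root mechanisms; that reduction is itself Conjecture~1 of the paper and is also open. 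So the proposal reproduces the paper's intended route and its open gaps rather than closing them; it should be presented as a proof strategy, not a proof.
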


\begin{proof}[Discussion]
Let $\mathcal{G}_f$ be a set of graphs where for any two graphs $G,G' \in \mathcal{G}_f$, their number of nodes, their influential sets $S^{inf.}(G) = S^{inf.}(G') = \{s_1,\cdots,s_m\}$ and the subgraphs constructed by agent $s_1$'s progeny are same. If an incentive compatible selection mechanism is not fair, there must exist such a set $\mathcal{G}_f$ where the mechanism fails fairness. Then the expected ratios in two graphs in $\mathcal{G}_f$ may be different, and the graph with a lower expected ratio might be improved since these two graphs are almost equivalent. One possible way for proving this conjecture is to design a function that reassigns probabilities for all graphs in $\mathcal{G}_f$ such that $x_{s_1}$ is the same for these graphs without hurting the property of incentive compatibility, and all graphs in $\mathcal{G}_f$ then share the same expected ratio without hurting the efficiency of the selection mechanism.
\end{proof}

\section{Conclusion}
In this paper, we investigate a selection mechanism for choosing the most influential agent in a network. We use the progeny of an agent to measure her influential level so that there exist some cases where an agent can decrease her out-edges to make her the most influential agent. We target selection mechanisms that can prevent such manipulations and select an agent with her progeny as large as possible. For this purpose, we propose the Geometric Mechanism that achieves at least $1/2$ of the optimal progeny. We also show that no mechanism can achieve an expected progeny of the selected agent that is greater than $1/(1+\ln 2)$ of the optimal under the conditions of incentive compatibility and fairness. 

There are several interesting aspects that have not been covered in this paper. First of all, there is still a gap between the efficiency of our proposed mechanism and the given upper bound. One of the future work is to find the optimal mechanism if it exists. In this direction, we also leave two open questions for further investigations. Moreover, selecting a set of influential agents rather than a single agent is also important in real-world applications (e.g., ranking or promotion). So another future work is to extend our results to the settings where a set of $k$ ($k>1$) agents need to be selected.%
%
%
\bibliographystyle{splncs04}
\bibliography{bib}
\end{document}